\definecolor{gray}{rgb}{0.5,0.5,0.5}
\def\idrm#1{\ensuremath{\mathrm{#1}}}
\newcommand{\no}[1]{}
\newcommand{\todo}[1]{} 
\newcommand{\oT}{{\overline T}}
\newcommand{\ocT}{{\overline {\cal T}}}
\newcommand{\ou}{{\overline u}}
\newcommand{\ov}{{\overline v}}
\newcommand{\cT}{{\cal T}}
\newcommand{\oB}{{\overline B}}
\newcommand{\oXu}{\overline{X_u}}
\newcommand{\oP}{{\overline P}}
\newcommand{\ra}{\idrm{rank}}
\newcommand{\sel}{\idrm{select}}
\newcommand{\acc}{\idrm{access}}
\newcommand{\Acc}{\mathit{Acc}}
\newcommand{\SA}{\mathit{SA}}
\newcommand{\longver}[1]{}
\title{Fast Compressed Self-Indexes with Deterministic Linear-Time Construction%
\footnote{Funded with Basal Funds FB0001, Conicyt, Chile.}}
\author[1]{J. Ian Munro}
\author[2]{Gonzalo Navarro}
\author[1]{Yakov Nekrich}
\affil[1]{Cheriton School of Computer Science, University of Waterloo. 
 {\tt imunro@uwaterloo.ca}, {\tt yakov.nekrich@googlemail.com}.}
\affil[2]{CeBiB --- Center of Biotechnology and Bioengineering, Department of 
Computer Science, University of Chile. {\tt gnavarro@dcc.uchile.cl}.}
\subjclass{E.1 Data Structures; E.4 Coding and Information Theory}
\keywords{Succinct data structures; Self-indexes; Suffix arrays; Deterministic
construction}
\begin{document}

\maketitle

\begin{abstract}
We introduce a compressed suffix array representation that, on a text 
$T$ of length $n$ over an alphabet of size $\sigma$, can be built in $O(n)$ 
deterministic time, within $O(n\log\sigma)$ bits of working space, and counts 
the number of occurrences of any pattern $P$ in $T$ in time 
$O(|P| + \log\log_w \sigma)$ on a RAM machine of $w=\Omega(\log n)$-bit words.
This new index outperforms all the other compressed indexes that can
be built in linear deterministic time, and some others. The only
faster indexes can be built in linear time only in expectation, or require
$\Theta(n\log n)$ bits.
We also show that, by using $O(n\log\sigma)$ bits, we can build in 
linear time an index that counts in time $O(|P|/\log_\sigma n + 
\log n(\log\log n)^2)$, which is RAM-optimal for $w=\Theta(\log n)$ and sufficiently long patterns.
\end{abstract}

\section{Introduction}
\label{sec:intro}

The string indexing problem consists in preprocessing a string $T$ so that,
later, we can efficiently find occurrences of patterns $P$ in $T$. The most
popular solutions to this problem are suffix trees \cite{Weiner73} and suffix
arrays \cite{MM93}. Both can be built in $O(n)$ deterministic time on a text
$T$ of length $n$ over an alphabet of size $\sigma$, and the best variants can 
count 
the number of times a string $P$ appears in $T$ in time $O(|P|)$, and even in
time $O(|P|/\log_\sigma n)$ in the word-RAM model if $P$ is given packed into
$|P|/\log_\sigma n$ words \cite{NN17}. Once counted, each occurrence can
be located in $O(1)$ time. Those optimal times, however, come with two important
drawbacks:
\begin{itemize}
\item The variants with this counting time cannot be built in $O(n)$ worst-case time.
\item The data structures use $\Theta(n\log n)$ bits of space.
\end{itemize}

The reason of the first drawback is that some form of perfect hashing is always
used to ensure constant time per pattern symbol (or pack of symbols). The 
classical suffix trees and arrays with linear-time deterministic construction 
offer $O(|P|\log\sigma)$ or $O(|P|+\log n)$ counting time, respectively.
More recently, those times have been reduced to $O(|P|+\log\sigma)$
\cite{CKL15} and even to $O(|P|+\log\log\sigma)$ \cite{FG13}. Simultaneously 
with our work, a suffix tree variant was introduced by Bille et
al.~\cite{BGS17}, which can be built in linear deterministic time and counts in
time $O(|P|/\log_\sigma n + \log |P| + \log\log\sigma)$.
All those indexes, however, still suffer from the second drawback,
that is, they use $\Theta(n\log n)$ bits of space. This makes them impractical
in most applications that handle large text collections.

Research on the second drawback dates back to almost two decades \cite{NM06},
and has led to indexes using $nH_k(T)+o(n(H_k(T)+1))$ bits, where $H_k(T) \le
\log\sigma$ is the $k$-th order entropy of $T$ \cite{Manzini01}, for any
$k \le \alpha\log_\sigma n -1$ and any constant $0<\alpha<1$. That is, the
indexes use asymptotically the same space of the compressed text, and
can reproduce the text and search it; thus they are called self-indexes. The 
fastest compressed self-indexes that can be built in linear deterministic time 
are able to count in time $O(|P|\log\log\sigma)$ \cite{BCGNNalgor13} or
$O(|P|(1+\log_w \sigma))$ \cite{BelazzouguiN15}. There exist
other compressed self-indexes that obtain times $O(|P|)$ \cite{BelazzouguiN14}
or $O(|P|/\log_\sigma n + \log_\sigma^\epsilon n)$ for any constant $\epsilon>0$
\cite{GrossiV05}, but both rely on perfect hashing and are not built in
linear deterministic time.
All those compressed self-indexes use $O(n\frac{\log n}{b})$ further
bits to locate the position of each occurrence found in time $O(b)$, and to 
extract any substring $S$ of $T$ in time $O(|S|+b)$.

In this paper we introduce the first compressed self-index that can be built
in $O(n)$ deterministic time (moreover, using $O(n\log\sigma)$ bits of space
\cite{MNN17}) and with counting time $O(|P|+\log\log_w \sigma)$, where
$w=\Omega(\log n)$ is the size in bits of the computer word. More precisely,
we prove the following result.

\begin{theorem}
  \label{thm:main}
On a RAM machine of $w=\Omega(\log n)$ bits, we can construct an index for a 
text $T$ of length $n$ over an alphabet of size $\sigma=O(n/\log n)$ in $O(n)$ 
deterministic time using $O(n\log\sigma)$ bits of working space. This index 
occupies $nH_k(T)+ o(n\log\sigma)+O(n\frac{\log n}{b})$ bits of space for a 
parameter $b$ and any $k \le \alpha \log_\sigma n-1$, for any constant 
$0<\alpha<1$. The occurrences of a pattern string $P$ can be counted in 
$O(|P|+\log\log_w \sigma)$ time, and then each such occurrence can be
located in $O(b)$ time. An arbitrary substring $S$ of $T$  can be 
extracted in time $O(|S|+b)$.
\end{theorem}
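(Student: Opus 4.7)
The plan is to assemble a compressed FM-index on the Burrows--Wheeler transform $\cTb$ of $T$, augmented with suffix-array and inverse-suffix-array samples at rate $b$ for locate and extract. For the construction phase, I would first compute the suffix array and derive $\cTb$ via the recent deterministic linear-time, $O(n\log\sigma)$-bits suffix-array algorithm of Munro, Navarro and Nekrich (MNN17). On top of $\cTb$ I would place an entropy-compressed representation supporting the rank and LF-mapping primitives underlying backward search. Applying the standard $k$-th-order compression-boosting recipe---partitioning $\cTb$ into pieces induced by length-$k$ right-contexts for the largest admissible $k\le\alpha\log_\sigma n-1$, and compressing each piece succinctly---should yield $nH_k(T)+o(n\log\sigma)$ bits, and every component will be built deterministically in $O(n)$ total time within the working-space budget using packed sorting and deterministic dictionaries.

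The key technical step is making counting run in $O(|P|+\log\log_w\sigma)$. The $|P|$ summand is what one expects from backward search, \emph{provided} every LF-mapping step runs in $O(1)$ worst-case time on the compressed representation; the $\log\log_w\sigma$ summand should arise as a one-shot overhead paid once per query rather than per character of $P$. The natural way to obtain it is to front-load a single deterministic predecessor-style lookup over the effective alphabet of $\cTb$ that translates all needed character codes into compact internal handles, after which a tight, constant-time inner loop processes each symbol of $P$ using these handles. The constant-time LF step on the compressed blocks would rely on packed-word tabulation, exploiting the fact that the blocks have $O(\log n)$-bit size and remain compatible with the entropy-compressed layout.

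The main obstacle, and where most of the work will go, is the counting bound itself: previous deterministically constructible compressed indexes charge at least $\Omega(\log\log\sigma)$ per character of $P$, and breaking that requires a careful orchestration of the alphabet mapping, the block decomposition, and the rank primitive so that all super-constant work is confined to the single preprocessing step on $P$. Locate and extract then follow the standard pattern: sampling $\Theta(n\log n / b)$ bits of suffix-array and inverse-suffix-array values lets any counted occurrence be located by at most $b$ LF-mapping hops to a sample (giving $O(b)$ per occurrence), and any substring $S$ extracted by at most $b$ hops from the nearest sample followed by $|S|$ character retrievals (giving $O(|S|+b)$); the samples themselves are built in deterministic linear time from the suffix array. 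Combining these pieces yields the claimed space, construction-time, and query-time bounds.
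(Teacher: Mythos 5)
Your proposal identifies the right \emph{target} (one $O(\log\log_w\sigma)$ charge for the whole query and $O(1)$ per character otherwise), but the mechanism you offer to reach it does not work, and it does not match what actually makes the theorem go through.

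The central gap is in the claim that ``every LF-mapping step runs in $O(1)$ worst-case time on the compressed representation'' via ``packed-word tabulation,'' with $\log\log_w\sigma$ arising only from a one-shot alphabet translation on $P$. Backward search requires a \emph{general} rank query $\ra_a(i,B)$ at every step, where $a$ is the next pattern symbol and $i$ is determined at query time; this is not a partial rank (where $B[i]=a$ is known), and for alphabets with $\log\sigma=\omega(\log w)$ it has an $\Omega(\log\log_w\sigma)$ lower bound in the cell-probe model even with $O(n\log\sigma)$ bits of space. No up-front remapping of the alphabet can collapse this to $O(1)$, because the position argument $i$ changes at each step; and packed-word tabulation cannot cover a super-constant alphabet within linear space. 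So the ``tight, constant-time inner loop'' simply does not exist in the form you describe, and the $O(|P|+\log\log_w\sigma)$ bound is not reached.

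What the paper actually does is structurally different: it keeps the suffix tree $\cT$ of $T$ \emph{and} the BWT $\oB$ of the \emph{reversed} text $\oT$, and walks them in lockstep (forward on $\cT$, forward on $\oB$, which is backward search for $\oP$). Nodes of $\cT$ are classified as heavy/light/special with threshold $d=\log\sigma$ on the number of leaves. At special (and some heavy) nodes $u$ a deterministic dictionary $D_u$ stores, for each heavy child label $a$, the precomputed values $\ra_a(l_u-1,\oB)$ and $\ra_a(r_u,\oB)$, so steps that descend to a heavy child need no rank query at all. Steps that stay within an edge, or that descend when the current interval has fewer than $\log^2\sigma$ non-$a$ symbols, are handled in $O(1)$ via partial-rank and the ``small interval rank'' structure of Lemma~\ref{lemma:interrank}. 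The only place a genuine $O(\log\log_w\sigma)$ rank query is needed is the single transition from a heavy node to a light node; since every descendant of a light node is light, this happens at most once, which is exactly where the additive $\log\log_w\sigma$ comes from. Your plan omits the suffix-tree side, the heavy/light decomposition, the precomputed rank values, and the small-interval-rank primitive --- all of which are essential --- and it uses the BWT of $T$ rather than of $\oT$, so there is no way to synchronize a forward suffix-tree descent for $P$ with the backward search. As written, the proposal would give $O(|P|\log\log_w\sigma)$ counting time, which is the already-known bound, not the theorem's bound.
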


We obtain our results with a combination of the compressed suffix tree $\cT$
of $T$ and the Burrows-Wheeler transform $\oB$ of the reversed text $\oT$.
We manage to simulate the suffix tree traversal for $P$, simultaneously
on $\cT$ and on $\oB$. With a combination of storing deterministic
dictionaries and precomputed rank
values for sampled nodes of $\cT$, and a constant-time method to compute an
extension of partial rank queries that considers small ranges in $\oB$, we
manage to ensure that all the suffix tree steps, except one, require constant
time. The remaining one is solved with general rank queries in time 
$O(\log\log_w \sigma)$. As a byproduct, we show that the compressed sequence
representations that obtain those $\ra$ times \cite{BelazzouguiN15} can also be 
built in linear deterministic time.

Compared with previous work, other indexes may be faster at counting, but
either they are not built in linear deterministic time 
\cite{BelazzouguiN14,GrossiV05,NN17} or they are not compressed
\cite{NN17,BGS17}. Our index outperforms all the previous compressed 
\cite{FMMN07,BCGNNalgor13,BelazzouguiN15}, as well as some uncompressed
\cite{FG13}, indexes that can be built deterministically.

As an application of our tools, we also show that an index using
$O(n\log\sigma)$ bits of space can be built in linear deterministic
time, so that it can count in time $O(|P|/\log_\sigma n+\log n(\log\log n)^2)$,
which is RAM-optimal for $w=\Theta(\log n)$ and sufficiently long patterns.
Current indexes obtaining similar counting time require $O(n\log\sigma)$ 
construction time \cite{GrossiV05} or higher \cite{NN17}, or 
$O(n\log n)$ bits of space \cite{NN17,BGS17}.

\section{Related Work}

\begin{table}
\begin{center}
\begin{tabular}{c|c|c|c}
& Compressed & Compact & Uncompressed \\
\hline
\rotatebox{90}{~Deterministic~} & 
\begin{minipage}{3.1cm}
\vspace*{-1.8cm}
\textcolor{gray}{$|P|\log\log\sigma$ \cite{BCGNNalgor13}} \\
\textcolor{gray}{$|P|(1+\log_w \sigma)$ \cite{BelazzouguiN15}} \\
$|P|+\log\log_w \sigma$ {\bf (ours)}
\end{minipage} 
&
\begin{minipage}{4.0cm}
\vspace*{-1.8cm}
$|P|/\log n+\log^\epsilon n$ \cite{GrossiV05} \\ (constant $\sigma$) \\
$|P|/\log_\sigma n+\log n(\log\log n)^2$ {\bf (ours)} 
\end{minipage} 
& 
\begin{minipage}{4.5cm}
\vspace*{-1.8cm}
\textcolor{gray}{$|P|+\log\log\sigma$ \cite{FG13}} \\
$|P|/\log_\sigma n + \log |P| + \log\log\sigma$ \cite{BGS17}
\end{minipage} \\
\hline
\rotatebox{90}{Randomized~~} & 
\begin{minipage}{3.1cm}
\vspace*{-1.3cm}
\textcolor{gray}{$|P|(1+\log\log_w \sigma)$ \cite{BelazzouguiN15}} \\
$|P|$ \cite{BelazzouguiN14} \\ 
\end{minipage} 
& 
&
\begin{minipage}{4.5cm}
\vspace*{-1.5cm}
$|P|/\log_\sigma n + \log_\sigma^\epsilon n$ \cite{GrossiV05,NN17}  
\end{minipage} \\
\end{tabular}
\end{center}
\caption{Our results in context. The $x$ axis refers to the space
used by the indexes (compressed meaning $nH_k(T)+o(n\log\sigma)$ bits, compact
meaning $O(n\log\sigma)$ bits, 
and uncompressed meaning $\Theta(n\log n)$ bits), and the $y$ axis refers to the
{\em linear-time} construction. In the cells we show the counting time for a 
pattern $P$. We only list the dominant alternatives, graying out those 
outperformed by our new results.}
\label{tab:res}
\end{table}

Let $T$ be a string of length $n$ over an alphabet of size $\sigma$ that is 
indexed
to support searches for patterns $P$. It is generally assumed that $\sigma =
o(n)$, a reasonable convention we will follow. Searches typically require to 
{\em count} the number of times $P$ appears in $T$, and then {\em locate} the 
positions of $T$ where $P$ occurs. The vast majority of the indexes for this 
task are suffix tree \cite{Weiner73} or suffix array \cite{MM93} variants.

The suffix tree can be built in linear deterministic time 
\cite{Weiner73,McCreight76,Ukkonen95}, even on arbitrarily large integer 
alphabets \cite{Farach97}.
The suffix array can be easily derived from the suffix tree in linear time, but
it can also be built independently in linear deterministic time 
\cite{KA05,KSPP05,KarkkainenSB06}. In their basic forms, these structures
allow counting the number of occurrences of a pattern $P$ in 
$T$ in time $O(|P|\log\sigma)$ (suffix tree) or $O(|P|+\log n)$ (suffix array).
Once counted, the occurrences can be located in constant time each.

Cole et al.~\cite{CKL15} introduced the {\em suffix trays}, a 
simple twist on suffix trees that reduces their counting time to
$O(|P|+\log\sigma)$. Fischer and Gawrychowski \cite{FG13} introduced
the {\em wexponential search trees}, which yield suffix trees with 
counting time $O(|P|+\log\log\sigma)$ and support dynamism.

All these structures can be built in linear deterministic time, but require
$\Theta(n\log n)$ bits of space, which challenges their practicality when
handling large text collections.

Faster counting is possible if we resort to perfect hashing and give away
the linear deterministic construction time. In the classical suffix tree, we
can easily achieve $O(|P|)$ time by hashing the children of suffix tree nodes,
and this is optimal in general. In the RAM model with word size 
$\Theta(\log n)$, and if the consecutive symbols of $P$ come packed into
$|P|/\log_\sigma n$ words, the optimal time is instead $O(|P|/\log_\sigma n)$.
This optimal time was recently reached by Navarro and Nekrich \cite{NN17}
(note that their time is not optimal if $w=\omega(\log n)$), with a simple
application of weak-prefix search, already hinted in the original article
\cite{BBPV10}. However, even the randomized construction time of the 
weak-prefix search structure is $O(n\log^\epsilon n)$, for any constant 
$\epsilon>0$. By replacing the weak-prefix search with the solution of Grossi
and Vitter \cite{GrossiV05} for the last nodes of the search, and using 
a randomized construction of their perfect hash functions, the index of 
Navarro and Nekrich \cite{NN17} can be built in linear randomized time and 
count in time $O(|P|/\log_\sigma n + \log^\epsilon_\sigma n)$.
Only recently, simultaneously with our work, a deterministic
linear-time construction algorithm was finally obtained for an index obtaining
$O(|P|/\log_\sigma n + \log|P| + \log\log\sigma)$ counting time \cite{BGS17}.

Still, these structures are not compressed. Compressed suffix trees and arrays
appeared in the year 2000 \cite{NM06}. To date, they take the space of the
compressed text and replace it, in the sense that they can extract any
desired substring of $T$; they are thus called self-indexes. The space 
occupied is measured in terms of the $k$-th
order empirical entropy of $T$, $H_k(T) \le \log\sigma$ \cite{Manzini01}, which
is a lower bound on the space reached by any statistical compressor that encodes
each symbol considering only the $k$ previous ones. Self-indexes may occupy as
little as
$nH_k(T) + o(n(H_k(T)+1))$ bits, for any $k \le \alpha\log_\sigma n -1$, for
any constant $0<\alpha<1$.

The fastest self-indexes with linear-time deterministic construction
are those of Barbay et al.~\cite{BCGNNalgor13}, which counts in time 
$O(|P|\log\log\sigma)$, and Belazzougui and Navarro 
\cite[Thm.~7]{BelazzouguiN15}, which counts in time $O(|P|(1+\log_w \sigma))$. 
The latter requires $O(n(1+\log_w \sigma))$ construction time, but if 
$\log \sigma=O(\log w)$, its counting time is $O(|P|)$ and its construction 
time is $O(n)$.

If we admit randomized linear-time constructions, then Belazzougui and Navarro
\cite[Thm.~10]{BelazzouguiN15} reach $O(|P|(1+\log\log_w\sigma))$ 
counting time. At the expense of $O(n)$ further bits, in another work
\cite{BelazzouguiN14} they reach $O(|P|)$ counting time.
Using $O(n\log\sigma)$ bits, and if $P$ comes in
packed form, Grossi and Vitter \cite{GrossiV05} can count in time 
$O(|P|/\log_\sigma n + \log_\sigma^\epsilon n)$, for any constant $\epsilon>0$,
however their construction requires $O(n\log\sigma)$ time.

Table~\ref{tab:res} puts those results and our contribution in context.
Our new self-index, with $O(|P|+\log\log_w \sigma)$ counting time, linear-time
deterministic construction, and $nH_k(T)+o(n\log\sigma)$ bits of space, 
dominates all the compressed indexes with linear-time deterministic 
construction \cite{BCGNNalgor13,BelazzouguiN15}, as well as some uncompressed 
ones \cite{FG13} (to 
be fair, we do not cover the case $\log\sigma=O(\log w)$, as in this case the 
previous work \cite[Thm.~7]{BelazzouguiN15} already obtains our result). Our 
self-index also dominates a previous one with linear-time randomized 
construction \cite[Thm.~10]{BelazzouguiN15}, which we
incidentally show can also be built deterministically. The only
aspect in which some of those dominated indexes may outperform ours is in that 
they may use $o(n(H_k(T)+1))$ \cite[Thm.~10]{BelazzouguiN15} or $o(n)$ 
\cite[Thm.~7]{BelazzouguiN15} bits
of redundancy, instead of our $o(n\log\sigma)$ bits.
We also derive a compact index (i.e., using $O(n\log\sigma)$ bits) that is
built in linear deterministic time and counts in time 
$O(|P|/\log_\sigma n + \log n(\log\log n)^2)$, which is the only one in this
category unless we consider constant $\sigma$ for Grossi and Vitter
\cite{GrossiV05}.

\section{Preliminaries}
\label{sec:prelim}

We denote by $T[i..]$ the suffix of $T[0,n-1]$ starting at position $i$ and by $T[i..j]$ the substring that begins with $T[i]$ and ends with $T[j]$, 
$T[i..]=T[i]T[i+1]\ldots T[n-1]$ and $T[i..j]=T[i]T[i+1]\ldots T[j-1]T[j]$. We
assume that the text $T$ ends with a special symbol \$ that lexicographically
precedes all other symbols in $T$.  The alphabet size is $\sigma$ and symbols
are integers in $[0..\sigma-1]$ (so \$ corresponds to $0$).  In this paper, as
in the previous work on this topic, we use the word RAM model of computation.
A machine word consists of $w=\Omega(\log n)$ bits and we can execute standard
bit and arithmetic operations in constant time. We assume for simplicity that 
the alphabet size $\sigma = O(n/\log n)$ (otherwise the text is almost 
incompressible anyway \cite{Gag06}). We also assume $\log\sigma=\omega(\log w)$,
since otherwise our goal is already reached in previous work 
\cite[Thm.~7]{BelazzouguiN15}.

\subsection{Rank and Select Queries}
\label{sec:ranksel}

We define three basic queries on sequences.
Let $B[0..n-1]$ be a sequence of symbols over alphabet $[0..\sigma-1]$. 
The rank query, $\ra_a(i,B)$, counts how many times $a$ occurs among the first 
$i+1$ symbols in $B$, $\ra_a(i,B)=|\{\,j \le i,~ B[j]=a\}|$.  The
select query, $\sel_a(i,B)$, finds the position in $B$ where $a$ occurs for the
$i$-th time, $\sel_a(i,B)=j$ iff $B[j]=a$ and $\ra_a(j,B)=i$. The third 
query is $\acc(i,B)$, which returns simply $B[i]$. 

We can answer $\acc$ queries in $O(1)$ time and $\sel$ queries in any 
$\omega(1)$ time, or vice versa, and $\ra$ queries in time 
$O(\log\log_w \sigma)$, which is optimal 
\cite{BelazzouguiN15}. These structures use $n\log\sigma + o(n\log\sigma)$ bits,
and we will use variants that require only compressed space. In this paper, we 
will show that those structures can be built in linear deterministic time.

An important special case of $\ra$ queries is the partial rank query,
$\ra_{B[i]}(i,B)$, which asks how many times $B[i]$
occurrs in $B[0..i]$. Unlike general rank queries, partial rank queries can
be answered in $O(1)$ time~\cite{BelazzouguiN15}. Such a structure can be
built in $O(n)$ deterministic time and requires $O(n\log\log\sigma)$ bits
of working and final space \cite[Thm.~A.4.1]{MNN17}. 

For this paper, we define a generalization of partial rank queries called 
interval rank queries, $\ra_a(i,j,B) =
\langle \ra_a(i-1,B),\ra_a(j,B)\rangle$, from where in particular we can
deduce the number of times $a$ occurs in $B[i..j]$. If $a$ does not occur
in $B[i..j]$, however, this query just returns $null$ (this is why it can be
regarded as a generalized partial rank query).

In the special case where the alphabet size is small, $\log \sigma=O(\log w)$, 
we can represent $B$ so that $\ra$, $\sel$, and $\acc$ queries are answered in 
$O(1)$ time \cite[Thm.~7]{BelazzouguiN15}, but we are not focusing on this 
case in this paper, as the problem has already been solved for this case.

\subsection{Suffix Array and Suffix Tree}

The suffix tree \cite{Weiner73} for a string $T[0..n-1]$ is a compacted
digital tree on the suffixes of $T$, where the leaves point to the starting
positions of the suffixes. We call $X_u$ the string leading to suffix tree 
node $u$. The suffix array \cite{MM93} is an array 
$\SA[0..n-1]$ such that $\SA[i]=j$ if and only if
$T[j..]$ is the $(i+1)$-th lexicographically smallest suffix of $T$. All the
occurrences of a substring $P$ in $T$ correspond to suffixes of $T$ that start
with $P$. These suffixes descend from a single suffix tree node, called the
{\em locus} of $P$, and also occupy a contiguous interval in the suffix array 
$\SA$. Note that the locus of $P$ is the node $u$ closest to the root for
which $P$ is a prefix of $X_u$. If $P$ has no locus node, then it does not
occur in $T$.

\subsection{Compressed Suffix Array and Tree}

A compressed suffix array (CSA) is a compact data structure that provides the
same functionality as the suffix array. The main component of a CSA is the
one that allows determining, given a pattern $P$, the suffix array range
$\SA[i..j]$ of the prefixes starting with $P$. Counting is then solved as
$j-i+1$. For locating any cell $\SA[k]$, and for extracting any substring $S$
from $T$, most CSAs make use of a sampled array $SAM_b$, which contains the
values of $\SA[i]$ such that $\SA[i]\!\!\mod b =0$ or $\SA[i]=n-1$. Here $b$ is
a tradeoff parameter: CSAs require $O(n\frac{\log n}{b})$ further bits 
and can locate in time proportional to $b$ and extract $S$ in time
proportional to $b+|S|$.
We refer to a survey \cite{NM06} for a more detailed description.

A compressed suffix tree \cite{Sadakane07} is formed by a compressed suffix
array and other components that add up to $O(n)$ bits. These include in
particular a representation of the tree topology that supports constant-time
computation of the preorder of a node, its number of children, its $j$-th 
child, its number of descendant leaves, and lowest common ancestors,
among others \cite{NS14}. Computing node preorders 
is useful to associate satellite information to the nodes.

Both the compressed suffix array and tree can be built in $O(n)$ deterministic
time using $O(n\log\sigma)$ bits of space \cite{MNN17}.

\subsection{Burrows-Wheeler Transform and FM-index} \label{sec:bwt}

The Burrows-Wheeler Transform (BWT) \cite{BW94} of a string $T[0..n-1]$ is 
another
string $B[0..n-1]$ obtained by sorting all possible rotations of $T$ and 
writing the last symbol of every rotation (in sorted order). The BWT is
related to the suffix array by the identity
$B[i]=T[(\SA[i]-1)\!\!\mod n]$. Hence, we can build the BWT by sorting the
suffixes and writing the symbols that precede the suffixes in lexicographical
order. 

The FM-index \cite{FerraginaM05,FMMN07} is a CSA that builds on the BWT. 
It consists of the following three main components:
\begin{itemize}
\item 
The BWT $B$ of $T$.
\item
The array $\Acc[0..\sigma-1]$ where $\Acc[i]$ holds the total number of symbols $a < i$ in $T$ (or equivalently, the total number of symbols $a<i$ in $B$).
\item
The sampled array $SAM_b$.
\end{itemize}

The interval of a pattern string $P[0..m-1]$ in the suffix array $\SA$ can be 
computed on the BWT $B$. The interval is computed backwards: 
for $i=m-1,m-2,\ldots$, we identify the interval of $P[i..m-1]$ in $B$.
The interval is initially the whole $B[0..n-1]$.
Suppose that we know the interval $B[i_1..j_1]$ that corresponds to
$P[i+1..m-1]$. Then the interval $B[i_2..j_2]$ that corresponds to $P[i..m-1]$
is computed as $i_2=\Acc[a]+\ra_c(i_1-1,B)$ and $j_2=\Acc[a]+\ra_c(j_1,B)-1$,
where $a=P[i]$.  Thus the interval of $P$ is found by answering $2m$ $\ra$
queries. Any sequence representation offering rank and access queries can then
be applied on $B$ to obtain an FM-index.

An important procedure on the FM-index is the computation of the function $LF$, 
defined as follows: if $\SA[j]=i+1$, then $\SA[LF(j)]=i$. $LF$ can be computed 
with access and partial $\ra$ queries on $B$, 
$LF(j)=\ra_{B[j]}(i,B)+\Acc[B[j]]-1$, and thus constant-time computation of $LF$
is possible. Using $SAM_b$ and $O(b)$ applications of $LF$, we can locate any 
cell $\SA[r]$. A similar procedure allows extracting any substring $S$ of $T$ 
with $O(b+|S|)$ applications of $LF$.


\section{Small Interval Rank Queries} \label{sec:smallrank}

We start by showing how a compressed data structure that supports select
queries can be extended to support a new kind of queries that we dub
\emph{small interval rank queries}. An interval query $\ra_a(i,j,B)$ is a
small interval rank query if $j-i\le \log^2\sigma$.
Our compressed index relies on the following result.

\begin{lemma}
  \label{lemma:interrank}
Suppose that we are given a data structure that supports $\acc$ queries on a
sequence $C[0..m-1]$, on alphabet $[0..\sigma-1]$, in time $t$. Then, using $O(m\log \log \sigma)$ additional bits, we can  support small  interval rank queries on $C$ in $O(t)$ time.
\end{lemma}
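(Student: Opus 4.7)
The plan is to partition $C[0..m-1]$ into consecutive blocks of $b=\log^2\sigma$ positions, so that any small interval $[i..j]$ overlaps at most two adjacent blocks $B_k,B_{k+1}$. Each block will carry a tiny local index that returns, in $O(t)$ time, enough information about the occurrences of any queried symbol $a$ inside an arbitrary sub-interval of that block; the answer to $\ra_a(i,j,C)$ is then assembled from the two per-block responses.

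For each block $B_l$, let $D_l\subseteq[0..\sigma-1]$ be its set of distinct symbols, so $|D_l|\le b$. I store three per-block components: (i) a monotone minimal perfect hash $h_l\colon D_l\to[0..|D_l|-1]$ in $O(|D_l|\log\log\sigma)$ bits with $O(1)$ query; (ii) an array $P_l[0..|D_l|-1]$ where $P_l[r]$ is any position of $B_l$ at which the symbol of local rank $r$ occurs, taking $O(|D_l|\log b)=O(|D_l|\log\log\sigma)$ bits; and (iii) the remapped block sequence $C'_l\in[0..|D_l|-1]^b$, equipped with the small-alphabet constant-time $\ra/\sel/\acc$ representation of \cite[Thm.~7]{BelazzouguiN15}, in $O(b\log|D_l|)=O(b\log\log\sigma)$ bits. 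The small-alphabet hypothesis $\log|D_l|=O(\log w)$ holds because $\sigma=O(n/\log n)$ yields $\log\sigma\le w$, hence $|D_l|\le\log^2\sigma\le w^2$. I also attach the partial-rank structure on $C$ from \cite[Thm.~A.4.1]{MNN17}, using $O(m\log\log\sigma)$ bits. Since $\sum_l|D_l|\le m$ and the block sizes sum to $m$, the total additional space is $O(m\log\log\sigma)$.

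To answer $\ra_a(i,j,C)$, for each of the at most two affected blocks $B_l$ I compute $r=h_l(a)$ and perform the single access $C[P_l[r]]$: because $h_l$ is faithful on $D_l$ and arbitrary outside, $a\in D_l$ iff $C[P_l[r]]=a$. When $a\in D_l$, I translate the portion of $B_l$ lying in $[i..j]$ into local coordinates $[i'..j']$, use constant-time $\ra$ on $C'_l$ to count occurrences of local rank $r$ in $C'_l[i'..j']$, and use constant-time $\sel$ on $C'_l$ to locate the first of them. Adding the two block counts yields the total number $N$ of occurrences of $a$ in $C[i..j]$; if $N=0$, I return $null$. Otherwise, let $p$ be the global position of the earliest such occurrence among the two blocks; a partial-rank query at $p$ returns $\ra_a(p,C)$ in $O(1)$, and I output $\ra_a(i-1,C)=\ra_a(p,C)-1$ and $\ra_a(j,C)=\ra_a(p,C)-1+N$. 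The total time is $O(t)$, dominated by the two verifying accesses.

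The main obstacle I expect is handling the spuriousness of $h_l$ on elements outside $D_l$; this is precisely the role of the single verifying access to $C$, and is the only point where the hypothesis "$\acc$ in time $t$" is consumed. A secondary delicate point is confirming that \cite[Thm.~7]{BelazzouguiN15} genuinely delivers $O(1)$-time $\ra/\sel$ on the per-block alphabets, which hinges on $\log|D_l|=O(\log w)$ verified above. Deterministic linear-time construction of all the pieces (mmPh, the partial-rank structure, and the small-alphabet representation), within the claimed space, is available in the cited works and, where needed, is re-established by the remaining sections of the present paper.
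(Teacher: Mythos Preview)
Your proposal is correct and establishes the lemma, but it takes a genuinely different route from the paper. Both arguments share the outer scaffolding: partition $C$ into blocks of $\log^2\sigma$ positions so that a small interval touches at most two blocks, and use the $O(m\log\log\sigma)$-bit partial-rank structure to turn a located in-interval occurrence into an absolute rank. The divergence is in the per-block machinery. The paper encodes each distinct symbol of a block by the offset of its first occurrence there (so the symbol is recovered by one $\acc$ to $C$), builds a succinct SB-tree~\cite{GrossiORR09} over those codes to test membership, and for each present symbol stores its sorted list of in-block offsets with another SB-tree for predecessor search; the leftmost and rightmost occurrences of $a$ in $[x..y]$ are then found by predecessor queries in at most two blocks. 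You instead remap each block to its local alphabet, store a monotone minimal perfect hash on the block's symbol set together with one witness position per local rank (verifying membership by a single $\acc$), and answer the in-block counting and locating via constant-time $\ra/\sel$ on the remapped sequence using \cite[Thm.~7]{BelazzouguiN15}, which is legitimate since $|D_l|\le\log^2\sigma\le w^{O(1)}$ gives $\log|D_l|=O(\log w)$. Your route is arguably cleaner and consumes the $\acc$ hypothesis more sparingly (two calls total, versus one per SB-tree probe). The paper's route, on the other hand, is more self-contained and makes the deterministic linear-time construction---treated immediately after the proof and essential for the headline theorem---immediate, since SB-trees are built from sorted $O(\log\log\sigma)$-bit integers. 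Your last paragraph gestures at this, but a deterministic $O(m)$-time construction of all the per-block monotone minimal perfect hashes within $O(m\log\log\sigma)$ total bits is not automatic from the references you cite and would need a concrete argument if you pursue this variant.
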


\begin{proof}
 We split $C$ into groups $G_i$ of $\log^2\sigma$
consecutive symbols, $G_i=C[i\log^2\sigma .. (i+1)\log^2\sigma-1]$. Let
$A_i$ denote the sequence of the distinct symbols that occur in $G_i$.
Storing $A_i$ directly
would need  $\log \sigma$ bits per symbol. Instead, we encode each element
of $A_i$ as its first position in $G_i$, which needs only $O(\log\log\sigma)$ 
bits. With this encoded sequence, since we have $O(t)$-time access to $C$, 
we have access to any element of $A_i$ in time $O(t)$. In addition, we store 
a succinct SB-tree~\cite{GrossiORR09} on the elements of $A_i$. This structure
uses $O(p\log\log u)$ bits to index $p$ elements in $[1..u]$, and supports
predecessor (and membership) queries in time $O(\log p / \log\log u)$ plus
one access to $A_i$. Since $u=\sigma$ and $p \le \log^2 \sigma$, the query
time is $O(t)$ and the space usage is bounded by $O(m\log\log\sigma)$ bits.

For each $a\in A_i$ we also keep the increasing list $I_{a,i}$ of all the
positions where $a$ occurs in $G_i$. Positions are stored as differences with
the left border of $G_i$: if $C[j]=a$, we store the  difference
$j-i\log^2\sigma$. Hence elements of $I_{a,i}$ can also be stored in 
$O(\log\log\sigma)$ bits per symbol, adding up to $O(m\log\log\sigma)$ bits. 
We also build an SB-tree on top of each $I_{a,i}$ to provide for predecessor
searches.

Using the SB-trees on $A_i$ and $I_{a,i}$, we can answer small interval 
rank queries $\ra_a(x,y,C)$. 
Consider a group  $G_i=C[i\log^2\sigma..(i+1)\log^2\sigma-1]$, 
an index $k$ such that  $i\log^2\sigma \le k \le (i+1)\log^2\sigma$, and a 
symbol $a$. We can find the largest $i\log^2\sigma \le r\le k$ 
such that $C[r]=a$, or determine it does not exist: First we look for the 
symbol $a$ in $A_i$; if $a\in A_i$, we find the predecessor of
$k-i\log^2\sigma$ in $I_{a,i}$. 

Now consider an interval $C[x..y]$ of size at most $\log^2 \sigma$. It intersects 
at most two groups,
$G_i$ and $G_{i-1}$.  We find the rightmost occurrence of symbol $a$ in
$C[x..y]$ as follows. First we look for the rightmost occurrence $y'\le y$ of
$a$ in $G_i$; if $a$ does not occur in $C[i\log^2\sigma.. y]$, we look for the
rightmost occurrence $y'\le i\log^2\sigma-1$ of $a$ in $G_{i-1}$. If this is
$\ge x$, we find
the leftmost occurrence $x'$ of $a$ in $C[x..y]$ using a symmetric procedure.
When $x' \le y'$ are found, we can compute $\ra_a(x',C)$ and $\ra_a(y',C)$
in $O(1)$ time by answering partial rank queries (Section~\ref{sec:ranksel}). 
These are supported in $O(1)$ time and $O(m\log\log\sigma)$ bits.
The answer is then $\langle \ra_a(x',C)-1,\ra_a(y',C)\rangle$, or $null$ if
$a$ does not occur in $C[x..y]$. 
\end{proof}

The construction of the small interval rank data structure is dominated by the
time needed to build the succinct SB-trees \cite{GrossiORR09}. These are simply
B-trees with arity $O(\sqrt{\log u})$ and height $O(\log p / \log\log u)$, where
in each node a Patricia tree for $O(\log\log u)$-bit chunks of the keys are
stored. To build the structure in $O(\log p / \log\log u)$ time per key, we
only need to build those Patricia trees in linear time. Given that the total
number of bits of all the keys to insert in a Patricia tree is 
$O(\sqrt{\log u} \log\log u)$, we do not even need to build the Patricia
tree. Instead, a universal precomputed table may answer any
Patricia tree search for any possible set of keys and any possible pattern,
in constant time. The size of the table is $O(2^{O(\sqrt{\log u} \log\log u)}
\sqrt{\log u}) = o(u)$ bits (the authors \cite{GrossiORR09} actually use a
similar table to answer queries). For our values of $p$ and $u$, the 
construction requires $O(mt)$ time and the universal table is of $o(\sigma)$ 
bits.

\section{Compressed Index}

We classify the nodes of the suffix tree $\cT$ of $T$ into heavy, light, and special,
as in previous work \cite{NN17,MNN17}. Let
$d=\log \sigma$.  A node $u$ of $\cT$ is \emph{heavy} if it has at least $d$
leaf descendants and \emph{light} otherwise.  We say that a heavy node $u$ is
\emph{special} if it has at least two heavy children.

For every special node $u$, we construct a deterministic dictionary 
\cite{HagerupMP01} $D_u$ that contains the labels of all the heavy children of 
$u$: If the $j$th child of $u$, $u_j$, is heavy and the first symbol on the edge
from to $u$ to $u_j$ is $a_j$, then we store the key $a_j$ in $D_u$ with $j$
as satellite data. If a heavy node $u$ has only one heavy child $u_j$ and $d$
or more light children, then we also store the data structure $D_u$
(containing only that heavy child of $u$). If, instead, a heavy node has one heavy child and less than $d$ light children, we just keep the index of the heavy child using $O(\log d)=O(\log\log \sigma)$ bits. 

The second component of our index is the Burrows-Wheeler Transform $\oB$ of the reverse text $\oT$. We store a data structure that supports rank, partial rank, select, and access queries on $\oB$. It is sufficient for us to support $\acc$ and partial rank queries in $O(1)$ time and $\ra$ queries in 
$O(\log\log_w \sigma)$ time.
We also construct the data structure described in Lemma~\ref{lemma:interrank}, 
which supports small interval rank queries in $O(1)$ time. 
Finally, we explicitly store the answers to some rank queries.  Let $\oB[l_u..r_u]$ denote the range of $\oXu$, where $\oXu$ is the reverse of $X_u$, for a suffix tree node $u$. For all data structures $D_u$ and for every symbol $a\in D_u$  we store the values of $\ra_a(l_u-1,\oB)$ and $\ra_a(r_u,\oB)$. 

Let us show how to store the selected precomputed answers to $\ra$ queries in
$O(\log\sigma)$ bits per query. Following a known scheme \cite{GolynskiMR06},
we divide the sequence $\oB$ into chunks of size $\sigma$. For each symbol
$a$, we encode the number $d_k$ of times $a$ occurs in each chunk $k$ in a
binary sequence $A_a=01^{d_0}01^{d_1}01^{d_2}\ldots$. If a symbol
$\oB[i]$ belongs to chunk $k = \lfloor i/\sigma \rfloor$, then
$\ra_a(i,\oB)$ is $\sel_0(k+1,A_a)-k$ plus the number of times $a$ occurs in
$\oB[k\sigma..i]$. The former value is computed in $O(1)$ time with a
structure that uses $|A_a|+o(|A_a|)$ bits \cite{Cla96,Mun96}, whereas the
latter value is in $[0,\sigma]$ and thus can be stored in $D_u$ using just
$O(\log\sigma)$ bits. The total size of all the sequences $A_a$ is $O(n)$ bits.

Therefore, $D_u$ needs $O(\log\sigma)$ bits per element. The total number of elements in all the structures $D_u$ is  equal to the  number of special nodes plus the number of  heavy nodes with  one heavy child and at least $d$ light children. Hence all $D_u$ contain $O(n/d)$ symbols and use 
$O((n/d)\log\sigma)=O(n)$ bits of space. Indexes of heavy children for nodes with only one heavy child and less than $d$ light children add up to $O(n\log\log\sigma)$ bits. 
The structures for partial rank and small interval rank queries on $\oB$ use
$O(n\log\log\sigma)$ further bits. Since we assume that $\sigma$ is $\omega(1)$,
we can simplify $O(n\log\log\sigma)=o(n\log\sigma)$.

The sequence representation that supports $\acc$ and $\ra$ queries on $\oB$ 
can be made to use $nH_k(T)+o(n(H_k(T)+1))$ bits,
by exploiting the fact that it is built on a BWT
\cite[Thm.~10]{BelazzouguiN15}.\footnote{In fact it is $nH_k(\oT)$, but this 
is $nH_k(T)+O(\log n)$ \cite[Thm.~A.3]{FerraginaM05}.} 
We note that they use constant-time $\sel$
queries on $\oB$ instead of constant-time access, so they can use $\sel$ queries
to perform $LF^{-1}$-steps in constant time. Instead, with our partial rank
queries, we can perform $LF$-steps in constant time (recall 
Section~\ref{sec:bwt}), and thus have constant-time $\acc$ instead of 
constant-time $\sel$ on $\oB$ (we actually do not use query $\sel$ at all). 
They avoid this solution because partial rank queries require
$o(n\log\sigma)$ bits, which can be more than $o(n(H_k(T)+1))$, but we are
already paying this price. 

Apart from this space, array $\Acc$ needs $O(\sigma\log n)= O(n)$ bits and
$SAM_b$ uses $O(n\frac{\log n}{b})$. The total space usage of our self-index
then adds up to $nH_k(T)+o(n\log\sigma)+O(n\frac{\log n}{b})$ bits.

\section{Pattern Search}

Given a query  string $P$, we will find in time $O(|P|+\log\log_w \sigma)$  the
range of the reversed string  $\oP$ in $\oB$. A backward search for $P$ in $B$
will be replaced by an analogous backward search for $\oP$ in $\oB$, that is,
we will find the range of $\overline{P[0..i]}$ if the range of
$\overline{P[0..i-1]}$ is known. Let $[l_i..r_i]$ be the range of
$\overline{P[0..i]}$. We can compute $l_i$ and $r_i$ from
$l_{i-1}$ and $r_{i-1}$ as $l_i=\Acc[a]+\ra_a(l_{i-1}-1,\oB)$ and
$r_i=\Acc[a]+\ra_a(r_{i-1},\oB)-1$, for $a=P[i]$. Using our auxiliary data structures on $\oB$ and the additional information stored in the nodes of the suffix tree $\cT$, we can answer the necessary $\ra$ queries in constant time (with one exception).  The idea is to traverse the suffix tree $\cT$ in synchronization with the forward search on $\oB$, until the locus of $P$ is found or we determine that $P$ does not occur in $T$.

Our procedure starts at the root node of $\cT$, with $l_{-1}=0$, $r_{-1}=n-1$, and $i=0$. 
We compute the ranges $\oB[l_i..r_i]$ that correspond to $\overline{P[0..i]}$
for $i=0,\ldots, |P|-1$. Simultaneously, we move down in  the suffix tree. Let $u$ denote the last visited node of $\cT$ and let $a=P[i]$.  We denote by $u_a$ the next node that we must visit in the suffix tree, i.e., $u_a$ is the locus of $P[0..i]$. 
We can compute $l_i$ and $r_i$ in $O(1)$ time if $\ra_a(r_{i-1},\oB)$ and
$\ra_a(l_{i-1}-1,\oB)$ are known.  We will show below that these queries can
be answered in constant time because either (a)  the answers to $\ra$ queries
are explicitly stored in $D_u$ or (b) the $\ra$ query that must be answered is
a small interval $\ra$ query. The only exception is the situation when we move
from a heavy node to a light node in the suffix tree; in this case  the $\ra$
query takes $O(\log\log_w \sigma)$ time. We note that, once we are in a light
node, we need not descend in $\cT$ anymore; it is sufficient to maintain the
interval in $\oB$.

For ease of description we distinguish between the following cases. 
\begin{enumerate}
\item Node $u$ is heavy and $a\in D_u$. In this case we identify the heavy
child $u_a$ of $u$ that is labeled with $a$ in constant time using the
deterministic dictionary. We can also find $l_i$ and $r_i$ in time $O(1)$
because $\ra_a(l_{i-1}-1,\oB)$ and $\ra_a(r_{i-1},\oB)$ are stored  in $D_u$.
\item Node $u$ is heavy and $a\not\in D_u$. In this case $u_a$, if it exists,
is a light node. We then find it with two standard rank queries on $\oB$, in order to compute $l_i$ and $r_i$ or determine that $P$ does not occur in $T$.
\item Node $u$ is heavy but we do not keep a dictionary $D_u$ for the node $u$. In this case $u$ has at most one heavy child and less than $d$ light children. We have two subcases:
	\begin{enumerate}
	\item If $u_a$ is the (only) heavy node, we find this out with a single
comparison, as the heavy node is identified in $u$. However, the values
$\ra_a(l_{i-1}-1,\oB)$ and $\ra_a(r_{i-1},\oB)$ are not stored in $u$. To
compute them, we exploit the fact that the number of non-$a$'s in 
$\oB[l_{i-1}..r_{i-1}]$ is less than $d^2$, as all the children apart from 
$u_a$ are light and less than $d$. Therefore, the first and the last occurrences
of $a$ in $\oB[l_{i-1}..r_{i-1}]$ must be at distance less than $d^2$ from
the extremes $l_{i-1}$ and $r_{i-1}$, respectively. Therefore, a small interval
rank query, $\ra_a(l_{i-1},l_{i-1}+d^2,\oB)$, gives us $\ra_a(l_{i-1}-1,\oB)$, 
since there is for sure an $a$ in the range. Analogously, 
$\ra_a(r_{i-1}-d^2,r_{i-1},\oB)$ gives us $\ra_a(r_{i-1},\oB)$.
	\item If $u_a$ is a light node, we compute $l_i$ and $r_i$ with two 
standard rank queries on $\oB$ (or we might determine that $P$ does not appear
in $T$).
	\end{enumerate}
\item Node $u$ is light. In this case, $P[0..i-1]$ occurs at most $d$ times in
$T$. Hence $\overline{P[0..i-1]}$ also occurs at most $d$ times in $\oT$ and
$r_{i-1}-l_{i-1}\le d$. Therefore  we can compute $r_i$ and $l_i$ in $O(1)$
time by answering a small interval rank query, 
$\langle \ra_a(l_{i-1}-1,\oB),\ra_a(r_{i-1},\oB)\rangle$. If this returns
$null$, then $P$ does not occur in $T$.
\item We are on an edge of the suffix tree between a node $u$ and some child $u_j$ of $u$. 
In this case all the occurrences of $P[0..i-1]$ in $T$ are followed by the same symbol, $c$, and all the occurrences of $\overline{P[0..i-1]}$ are preceded by $c$ in $\oT$. Therefore $\oB[l_{i-1}..r_{i-1}]$ contains only the symbol $c$. 
This situation can be verified with access and partial rank queries
on $\oB$: $\oB[r_{i-1}]=\oB[l_{i-1}]=c$ and
$\ra_c(r_{i-1},\oB) - \ra_c(l_{i-1},\oB) = r_{i-1}- l_{i-1}$.
In this case, if $a \not= c$, then $P$ does not occur in $T$; otherwise we
obtain the new range with the partial rank query $\ra_c(r_{i-1},\oB)$, and
$\ra_c(l_{i-1}-1,\oB) = \ra_c(r_{i-1},\oB) - (r_{i-1}-l_{i-1}+1)$. Note that
if $u$ is light we do not need to consider this case; we may directly apply
case 4.
\end{enumerate}

Except for the cases 2 and 3b, we can find $l_i$ and $r_i$ in $O(1)$ time. In cases 2 and 3b we need $O(\log\log_w \sigma)$ time to answer general $\ra$ queries. However, these cases only take place when the node $u$ is heavy and  its child $u_a$ is light. Since all descendants of a light node are light, those cases occur only once along the traversal of $P$.  Hence the total time to find the range of $\oP$ in $\oB$ is $O(|P|+\log\log_w \sigma)$. Once the range is known, we can count and report all occurrences of $\oP$ in the standard way.

\section{Linear-Time Construction}

\subsection{Sequences and Related Structures}

Apart from constructing the BWT $\oB$ of $\oT$, which is a component of the
final structure, the linear-time construction of the other components requires
that we also build, as intermediate structures, the BWT $B$ of $T$, and the
compressed suffix trees $\ocT$ and $\cT$ of $\oT$ and $T$,
respectively. All these are built in $O(n)$ deterministic time and
using $O(n\log\sigma)$ bits of space \cite{MNN17}.
We also keep, on top of both $\oB$ and $B$, $O(n\log\log\sigma)$-bit data 
structures able to report, for any interval $\oB[i..j]$ or $B[i..j]$, all the 
distinct symbols from this interval, and their frequencies in the interval. 
The symbols are retrieved in arbitrary order. These auxiliary data structures
can also be constructed in $O(n)$ time \cite[Sec.\ A.5]{MNN17}.

On top of the sequences $B$ and $\oB$, we build the representation that
supports $\acc$ in $O(1)$ and $\ra$ in $O(\log\log_w\sigma)$ time
\cite{BelazzouguiN15}. In their original paper, those structures are built 
using perfect hashing, but a deterministic construction is also possible
\cite[Lem.\ 11]{BelazzouguiCKM16}; we give the details next.

The key part of the construction is that, within a chunk of $\sigma$ symbols,
we must build a virtual list $I_a$ of the positions where each symbol $a$ 
occurs, and provide predecessor search on those lists in $O(\log\log_w\sigma)$
time. We divide each list into blocks of $\log^2 \sigma$ elements, and create
a succinct SB-tree \cite{GrossiORR09} on the block elements, much as in
Section~\ref{sec:smallrank}. The search time inside a block is then $O(t)$,
where $t$ is the time to access an element in $I_a$, and the total extra space
is $O(n\log\log\sigma)$ bits. If there is more than one block in $I_a$, then
the block minima are inserted into a predecessor structure 
\cite[App.~A]{BelazzouguiN15} that will find the closest preceding 
block minimum in time $O(\log\log_w \sigma)$ and use $O(n\log\log \sigma)$ 
bits. This structure uses perfect hash functions called $I(P)$, which provide
constant-time membership queries. Instead, we replace them with deterministic 
dictionaries \cite{HagerupMP01}. The only disadvantage of these dictionaries 
is that they require $O(\log\sigma)$ construction time per element, and since 
each element is inserted into $O(\log\log_w \sigma)$ structures $I(P)$, the 
total construction time per element is $O(\log\sigma \log\log_w \sigma)$. 
However, since we build these structures only on $O(n/\log^2 \sigma)$ block
minima, the total construction time is only $O(n)$.

On the variant of the structure that provides constant-time $\acc$, the access
to an element in $I_a$ is provided via a permutation structure \cite{MRRR12}
which offers access time $t$ with extra space $O((n/t)\log\sigma)$ bits. 
Therefore, for any $\log\sigma = \omega(\log w)$, we can have 
$t = O(\log\log_w \sigma)$ with $o(n\log\sigma)$ bits of space.

\subsection{Structures $D_u$}

The most complex part of the construction is to fill the data of the $D_u$
structures. We visit all the nodes of $\cT$ and identify those nodes
$u$ for which the data structure $D_u$ must be constructed. This can be easily 
done in linear time, by using the constant-time computation of the number of 
descendant leaves. To determine if we must build $D_u$, we traverse its 
children $u_1, u_2,\ldots$ and count their descendant leaves to decide if they 
are heavy or light.

We use a bit vector $D$ to mark the preorders of the
nodes $u$ for which $D_u$ will be constructed: If $p$ is the preorder of node
$u$, then it stores a structure $D_u$ iff $D[p]=1$, in which case $D_u$ is 
stored in an array at position $\ra_1(D,p)$. If, instead, $u$ does not store
$D_u$ but it has one heavy child, we store its child rank in another array 
indexed by $\ra_0(D,p)$, using $\log\log\sigma$ bits per cell.


The main difficulty is how to compute the symbols $a$ to be stored in $D_u$,
and the ranges $\oB[l_u,r_u]$, for all the selected nodes $u$. It is not easy 
to do this through a preorder traversal of $\cT$ because we would 
need to traverse edges that represent many symbols. Our approach, instead, is 
inspired by the navigation of the suffix-link tree using two BWTs given by
Belazzougui et al.~\cite{BelazzouguiCKM13}. Let $\cT_w$ denote the tree whose
edges correspond to Weiner links between internal nodes in $\cT$. That is,
the root of $\cT_w$ is the same root of $\cT$ and,
if we have internal nodes $u,v \in \cT$ where $X_v = a \cdot X_u$ for some
symbol $a$, then $v$ descends from $u$ by the symbol $a$ in $\cT_w$.
We first show that the nodes of $\cT_w$ are the internal nodes of $\cT$.
The inclusion is clear by definition in one direction; the other is well-known
but we prove it for completeness.

\begin{lemma}
  \label{lemma:wtree}
All internal nodes of the suffix tree $\cT$ are nodes of $\cT_w$.  
\end{lemma}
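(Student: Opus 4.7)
The plan is to induct on the length $|X_u|$ of the string spelled out by an internal node $u\in\cT$, and show at each step that $u$ is the target of a Weiner link from another internal node whose string is one symbol shorter. Since the root of $\cT_w$ is by definition the root of $\cT$, the base case $|X_u|=0$ is immediate, so I focus on the inductive step.

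Take an internal node $u$ with $|X_u|=k\ge 1$, and write $X_u = a\cdot\beta$ where $a$ is the first symbol and $|\beta|=k-1$. I first claim that $\beta$ is itself the path-string of some internal node $v$ of $\cT$. Since $u$ is internal, the substring $a\beta$ admits at least two distinct right extensions in $T$, that is, there exist symbols $c_1\neq c_2$ such that both $a\beta c_1$ and $a\beta c_2$ occur in $T$. Stripping off the leading $a$, both $\beta c_1$ and $\beta c_2$ occur in $T$ as well, so $\beta$ also has at least two distinct right extensions. Hence the locus of $\beta$ in $\cT$ is exactly the branching node $v$ with $X_v=\beta$, which is internal.

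Now by definition of $\cT_w$, since $v$ and $u$ are both internal and $X_u=a\cdot X_v$, there is an edge in $\cT_w$ from $v$ to $u$ labeled $a$. Applying the inductive hypothesis to $v$, which satisfies $|X_v|=k-1<k$, we conclude $v\in\cT_w$, and therefore $u\in\cT_w$ as well. This completes the induction and proves the lemma.

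The only subtle step is the claim that shrinking $a\beta$ to $\beta$ preserves the property of having multiple right extensions; everything else is bookkeeping about the definitions of $\cT$ and $\cT_w$. I anticipate the write-up to be short, since the argument is essentially the standard observation that suffix links (the reverse of Weiner links) from internal nodes always land on internal nodes.
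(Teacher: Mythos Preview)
Your proof is correct and follows essentially the same approach as the paper: induction on $|X_u|$, writing $X_u=a\cdot\beta$, observing that the two right extensions of $a\beta$ survive after dropping $a$ so that $\beta$ labels an internal node, and then invoking the Weiner-link edge together with the inductive hypothesis. The paper's argument is identical in structure and content.
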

\begin{proof}
We proceed by induction on $|X_u|$, where the base case holds by definition.
Now let a non-root internal node $u$ of $\cT$ be labeled by string $X_u = aX$. This means that  there are at least two different symbols $a_1$ and $a_2$ such that both $aXa_1$ and $aXa_2$ occur in the text $T$. Then both $Xa_1$ and $Xa_2$ also occur in $T$. Hence there is an internal node $u'$ with $X_{u'}=X$ in $\cT$ and a Weiner link from $u'$ to $u$. Since $|X_{u'}| = |X_u|-1$, it holds by the
inductive hypothesis that $u'$ belongs to $\cT_w$, and thus $u$ belongs to 
$\cT_w$ as a child of $u'$.
\end{proof}

We do not build $\cT_w$ explicitly, but just traverse its nodes conceptually in
depth-first order and compute the symbols to store in the structures $D_u$ 
and the intervals in $\oB$. 
Let $u$ be the current node of $\cT$ in this traversal and $\ou$ its
corresponding locus in $\ocT$. Assume for now that $\ou$ is a nod, too.
Let $[l_u,r_u]$ be the interval of $X_u$ in $B$ 
and $[l_\ou,r_\ou]$ be the interval of the reverse string $X_\ou$ in $\oB$.%
\footnote{In the rest of the paper we wrote $\oB[l_u..r_u]$ instead of
$\oB[l_\ou..r_\ou]$ for simplicity, but this may cause confusion in this
section.}
Our algorithm starts at the root nodes of $\cT_w$, $\cT$, and $\ocT$, which 
correspond to the empty string, and the intervals in $B$ and $\oB$ are 
$[l_u,r_u]=[l_\ou,r_\ou]=[0,n-1]$. 
We will traverse only
the heavy nodes, yet in some cases we will have to work on all the nodes. We
ensure that on heavy nodes we work at most $O(\log\sigma)$ time, and at most
$O(1)$ time on arbitrary nodes.

Upon arriving at each node $u$, we first compute its heavy children. From the 
topology of $\cT$ we identify the interval $[l_i,r_i]$ for every child
$u_i$ of $u$, by counting leaves in the subtrees of the successive children of
$u$. By reporting all the distinct symbols in $\oB[l_\ou..r_\ou]$ with their
frequencies, we identify the labels of those children. However, the labels are 
retrieved in arbitrary order and we cannot afford sorting them all. Yet, since
the labels are associated with their
frequencies in $\oB[l_\ou..r_\ou]$, which match their number of leaves in the
subtrees of $u$, we can discard the labels of the light children, that is,
those appearing less than $d$ times in $\oB[l_\ou..r_\ou]$. The remaining,
heavy, children are then sorted and associated with the successive heavy
children $u_i$ of $u$ in $\cT$. 

If our preliminary pass marked that a $D_u$ structure must be 
built, we construct at this moment the deterministic dictionary 
\cite{HagerupMP01} with the labels $a$ of the heavy children of $u$ we have just
identified, and associate them with the satellite data $\ra_a(l_\ou-1,\oB)$ 
and $\ra_a(r_\ou,\oB)$. This construction takes $O(\log\sigma)$ time per
element, but it includes only heavy nodes.

We now find all the Weiner links from $u$. For every (heavy or light)
child $u_i$ of $u$, we compute the list $L_i$ of all the distinct symbols that 
occur in $B[l_i..r_i]$.  We mark those symbols $a$ in an array
$V[0..\sigma-1]$ that holds three possible values: not seen, seen, and seen (at
least) twice. If $V[a]$ is not seen, then we mark it as seen; if it is seen, we 
mark it as seen twice; otherwise we leave it as seen
twice. We collect a list $E_u$ of the symbols that are seen twice along this
process, in arbitrary order. For every symbol $a$ in $E_u$, there 
is an explicit Weiner link from $u$ labeled by $a$: Let $X=X_u$; if $a$ 
occurred in $L_i$ and $L_j$ then both $aXa_i$ and $aXa_j$ occur in $T$ and 
there is a suffix tree node that corresponds to the string $aX$. 
The total time to build $E_u$ amortizes to $O(n)$: for each child
$v$ of $u$, we pay $O(1)$ time for each child the node
$\ov$ has in $\ocT$; each node in
$\ocT$ contributes once to the cost.

The targets of the Weiner links from $u$ in $\cT$ correspond to the children of the node $\ou$ in 
$\ocT$. To find them, we collect all the distinct symbols in $B[l_u..r_u]$ and 
their frequencies. Again, we discard the symbols with frequency less than $d$,
as they will lead to light nodes, which we do not have to traverse. The others
are sorted and associated with the successive heavy children of $\ou$.
By counting leaves in the successive children, we obtain the intervals 
$\oB[l'_i..r'_i]$ corresponding to the heavy children $\ou'_i$ of $\ou$.

We are now ready to continue the traversal of $\cT_w$: for each Weiner link
from $u$ by symbol $a$ leading to a heavy node, which turns out to be the $i$-th
child of $\ou$, we know that its node in $\ocT$ is $\ou'_i$ (computed from $\ou$
using the tree topology) and its interval is $\oB[l'_i..r'_i]$. To compute the
corresponding interval on $B$, we use the backward step operation, 
$B[x,y] = B[\Acc[a]+\ra_a(l_u-1,B),\Acc[a]+\ra_a(r_u,B)-1]$. This requires
$O(\log\log_w \sigma)$ time, but applies only to heavy nodes. Finally, the
corresponding node in $\cT$ is obtained in constant time as the lowest common
ancestor of the $x$-th and the $y$-th leaves of $\cT$.

In the description above we assumed for simplicity that $\ou$ is a node in $\ocT$. In the general case $\ou$ can be located on an edge of $\ocT$. This situation arises when all occurrences of $\oXu$ in the reverse text $\oT$ are followed by the same symbol $a$. In this case there is at most one Weiner link from $u$; the interval in $\oB$ does not change as we follow that link.

A recursive traversal of $\cT_w$ might require $O(n \sigma\log n)$ bits for 
the stack, because we store several integers associated to heavy children during
the computation of each node $u$. We can limit the stack height by determining 
the largest subtree among the Weiner links of $u$, traversing all the others 
recursively, and then moving to that largest Weiner link target without 
recursion \cite[Lem.\ 1]{BelazzouguiCKM13}. Since only the largest subtree of 
a Weiner 
link target can contain more than half of the nodes of the subtree of $u$, the 
stack is guaranteed to be of height only $O(\log n)$. The space usage is thus 
$O(\sigma\log^2 n) = O(n\log\sigma)$. 

As promised, we have spent at most $O(\log\sigma)$ time on heavy nodes, which
are $O(n/d)=O(n/\log \sigma)$ in total, thus these costs add up to $O(n)$.
All other costs that apply to arbitrary nodes are $O(1)$. The structures for 
partial rank queries (and the succinct SB-trees) can also be built in linear 
deterministic time, as shown in Section~\ref{sec:smallrank}. 
Therefore our index can be constructed in $O(n)$ time. 

\section{A Compact Index}

As an application of our techniques, we show that it is possible to obtain
$O(|P|/\log_\sigma n + \log^2 n)$ search time, and even
$O(|P|/\log_\sigma n + \log n (\log\log n)^2)$, with an index that
uses $O(n\log\sigma)$ bits and is built in linear deterministic time.

We store $\oB$ in compressed form and a sample of the heavy nodes of $\cT$.
Following previous work \cite{GrossiV05,NN17}, we start from the root and store 
a deterministic dictionary \cite{HagerupMP01} with all the highest suffix tree
nodes $v$ representing strings of depth $\ge \ell = \log_\sigma n$. The key 
associated with each node is a $\log(n)$-bit integer formed with the first 
$\ell$ symbols of the strings $P_v$. The satellite data are the length $|P_v|$,
a position where $P_v$ occurs in $T$, and the range $\oB[l_v..r_v]$ of $v$.
From each of those nodes $v$, we repeat the process with the 
first $\ell$ symbols that follow after $P_v$, and so on.
The difference is that no light node will be inserted in those dictionaries.
Let us charge the $O(\log n)$ bits of space to the children nodes, which are 
all heavy. If we count only the special nodes, which are $O(n/d)$, this
amounts to $O((n \log n)/d)$ total bits and construction time. 
Recall that $d$ is the maximum subtree size of light nodes. This time will 
use $d=\Theta(\log n)$ to have linear construction time and bit space, and 
thus will not take advantage of small rank interval queries.

There are, however, heavy nodes that are not special. These form possibly long
chains between special nodes, and these will also induce chains of sampled
nodes. While building the dictionaries for those
nodes is trivial because they have only one sampled child, the total space
may add up to $O(n\log\sigma)$ bits, if there are $\Theta(n)$ heavy nodes 
and the sampling chooses one out of $\ell$ in the chains. To avoid this,
we increase the sampling step in those chains, enlarging it to $\ell'=\log n$.
This makes the extra space spent in sampling heavy non-special nodes to be
$O(n)$ bits as well.

In addition, we store the text $T$ with a data structure that uses $nH_k(T)
+o(n\log\sigma)$ for any $k=o(\log_\sigma n)$, and allows us extract 
$O(\log_\sigma n)$ consecutive symbols in constant time \cite{FV07}.

The search for $P$ starts at the root, where its first $\ell$ symbols are used 
to find directly the right descendant node $v$ in the dictionary stored at the 
root. If $|P_v| > \ell$, we directly compare the other $|P_v|-\ell$ symbols of 
$P$ with the text, from the stored position where $v$ appears, by chunks of 
$\ell$ symbols. Then we continue the search from $v$ (ignore the chains for
now).

When the next $\ell$ symbols of $P$ are not found in the dictionary of the
current node $v$, or 
there are less than $\ell$ remaining symbols in $P$, we continue using backward
search on $\oB$ from the interval $\oB[l_v..r_v]$ stored at $v$, and do not
use the suffix tree topology anymore. If there are less 
than $\ell$ remaining symbols in $P$, we just proceed symbolwise and complete 
the search in $O(\ell \log\log_w \sigma)$ time. Before this point, we still 
proceed by chunks of $\ell$ symbols, except when (conceptually) traversing 
the explicit light suffix tree nodes, where we perform individual backward
steps. Because there are only $O(d)$ suffix tree nodes in the remaining subtree,
the number of backward steps to traverse such light nodes is $O(d)$ and can be 
performed in time $O(d \log\log_w \sigma)$. All the symbols between consecutive
light nodes must be traversed in chunks of $\ell$. Let $v$ be our current node
and $u$ its desired child, and let $k+1$ be the number of symbols labeling the 
edge between them. After the first backward step from $v$ to $u$ leads us from 
the interval $\oB[l_v..r_v]$ to $\oB[l..r]$, we must perform $k$ further 
backward steps, where each intermediate interval is formed by just one symbol. 
To traverse them fast, we first compute $t_l = \overline{\SA}[l]$ and
$t_r = \overline{\SA}[r]$. The desired symbols are then
$\oT[t_l-1]=\oT[t_r-1], \ldots, \oT[t_l-k]=\oT[t_r-k]$. 
We thus compare the suffixes $T[n-t_l ..]$ 
and $T[n-t_r ..]$ with what remains of $P$, by chunks of $\ell$ symbols, until 
finding the first difference; $k$ is then the number of coincident symbols 
seen. If the two suffixes coincide up to $k+1$ but they differ from $P$, then
$P$ is not in $T$. Otherwise, we can move (conceptually) to node $u$ and 
consume the $k$ coincident characters from $P$. We can compute the new interval
$[l_u..r_u] = [\overline{\SA}\,^{-1}[t_l-k],\overline{\SA}\,^{-1}[t_r-k]]$.
Since $\overline{\SA}$ and its inverse are computed in $O(b)$ time with arrays
similar to $SAM_b$, we spend $O(b)$ time to cross each of the $O(d)$ edges in 
the final part of the search.

Let us now regard the case where we reach a sampled node $v$ that starts a 
sampled chain. In this case the sampling step grows to $\ell'$. We still 
compare $P$ with a suffix of $T$ where its heavy sampled child $u$ appears,
in chunks of $\ell$ symbols. If they coincide, we continue the search from
$u$. Otherwise, we resume the search from $\oB[l_v..r_v]$ using backward
search. We might have to process $\ell'$ symbols of $P$, in time
$O(\ell' \log\log_w \sigma)$, before reaching a light node. Then, we proceed
as explained.

Overall, the total space is $2nH_k(T)+o(n\log\sigma)+O(n) + O((n\log n)/b)$ 
bits, and the construction time is $O(n)$. To have $o(n\log\sigma)+O(n)$ bits of
redundancy in total, we may choose $b=\Theta(\log n)$ or any 
$b=\omega(\log_\sigma n)$. The search time is $O(|P|/\log_\sigma n +
\ell' \log\log_w \sigma + d \log\log_w \sigma + db)$. We can, for
example, choose $b=\Theta(\log n)$, to obtain the following result.

\begin{theorem}
  \label{thm:second}
On a RAM machine of $w=\Omega(\log n)$ bits, we can construct an index for a 
text $T$ of length $n$ over an alphabet of size $\sigma=O(n/\log n)$ in $O(n)$ 
deterministic time using $O(n\log\sigma)$ bits of working space. This index 
occupies $2nH_k(T)+ o(n\log\sigma)+O(n)$ bits of space for any 
$k=o(\log_\sigma n)$. The occurrences of a pattern string $P$ can be counted in 
$O(|P|/\log_\sigma n+\log^2 n)$ time, and then each such occurrence can be
located in $O(\log n)$ time. An arbitrary substring $S$ of $T$  can be 
extracted in time $O(|S|/\log_\sigma n)$.
\end{theorem}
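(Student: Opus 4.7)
The plan is to assemble the compact index already sketched in the preceding discussion with the concrete parameter choice $\ell = \log_\sigma n$ and $d = \ell' = b = \Theta(\log n)$, and then verify the four claims (space, construction, counting, locate/extract) by direct accounting.

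For the space I would sum the explicit contributions. The compressed BWT $\oB$ contributes $nH_k(T)+o(n\log\sigma)$ bits, and the Ferragina-Venturini representation of $T$ \cite{FV07} a second $nH_k(T)+o(n\log\sigma)$ for any $k=o(\log_\sigma n)$. The sampled deterministic dictionaries \cite{HagerupMP01} take $O(n)$ bits: the $O(n/d)$ special nodes each store $O(\log n)$ bits, and the non-special heavy chains sampled at step $\ell'$ contribute $O((n\log n)/\ell')=O(n)$ bits. Partial-rank and small-interval-rank auxiliaries add $o(n\log\sigma)$, and $SAM_b$ with $b=\Theta(\log n)$ adds $O((n\log n)/b)=O(n)$ bits. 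The totals match $2nH_k(T)+o(n\log\sigma)+O(n)$, and working space stays in $O(n\log\sigma)$ throughout the build.

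For construction I would reuse the deterministic linear-time pipeline developed for Theorem~\ref{thm:main} to obtain $\oB$, $\cT$, $\ocT$, $SAM_b$, and the FV representation of $T$ in $O(n)$ time. One preorder pass over $\cT$ identifies every sampled node from its subtree size; each dictionary entry costs $O(\log n)$ time and there are only $O(n/\log n)$ entries in total (one per special node plus one per sampled position along a chain), so all dictionaries together are built in $O(n)$ time.

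Counting splits into three phases already described. Phase one descends through sampled heavy nodes in blocks of $\ell$ symbols of $P$, using the FV structure to chunk-compare suffixes of $T$, at cost $O(|P|/\log_\sigma n)$. Phase two fires when fewer than $\ell$ symbols of $P$ remain or the stored dictionary misses; it performs at most $\ell'$ backward search steps on $\oB$, costing $O(\ell'\log\log_w\sigma)$. Phase three crosses at most $d$ light suffix-tree nodes, each via one general $\ra$ query on $\oB$ in $O(\log\log_w\sigma)$ time plus one edge jump through $\overline{\SA}$ and $\overline{\SA}\,^{-1}$ in $O(b)$ time, for a total of $O(d\log\log_w\sigma+db)=O(\log^2 n)$. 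Adding the three contributions gives the claimed $O(|P|/\log_\sigma n+\log^2 n)$. Locating each reported occurrence is then a standard $O(b)=O(\log n)$ sequence of $LF$-steps onto $SAM_b$, and extraction of any substring $S$ emits $\log_\sigma n$ symbols per constant-time FV access, giving $O(|S|/\log_\sigma n)$. The step I expect to need most care is confirming that the coarser chain sampling at step $\ell'$ keeps all dictionaries inside $O(n)$ bits while still allowing phase two to complete within the $O(\log^2 n)$ additive term; this hinges on the observation that at most one sampled chain is entered per query and that $\ell'\log\log_w\sigma=O(\log^2 n)$ with our parameter choice.
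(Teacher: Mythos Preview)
Your proposal is correct and follows essentially the same route as the paper: the same components ($\oB$, the FV-compressed $T$, deterministic dictionaries at sampled heavy nodes with step $\ell$ on special nodes and $\ell'$ on chains, $SAM_b$), the same parameter choices $d=\ell'=b=\Theta(\log n)$, and the same three-phase search whose cost sums to $O(|P|/\log_\sigma n+\ell'\log\log_w\sigma+d\log\log_w\sigma+db)=O(|P|/\log_\sigma n+\log^2 n)$. One minor remark: the paper explicitly notes that with $d=\Theta(\log n)$ the small-interval-rank structure is not used here, so you can drop it from the space inventory (it does no harm, but it is unnecessary for this theorem).
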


\subsection{Faster and Larger}

By storing the BWT $B$ of $T$ and the $O(n)$ additional bits to support a
compressed suffix tree $\cT$ \cite{Sadakane07}, we can reduce the $O(\log^2 n)$
extra time to $O(\log n (\log\log n)^2)$.

The idea is to speed up the traversal on the light nodes, as follows. For each
light node $v$, we store the leaf in $\cT$ where the heavy path starting at $v$
ends. The heavy path chooses at each node the subtree with the most leaves, thus
any traversal towards a leaf has to switch to another heavy path only 
$O(\log d)$
times. At each light node $v$, we go to the leaf $u$ of its heavy path, obtain
its position in $T$ using the sampled array $SAM_b$ of $B$, and compare the
rest of $P$ with the corresponding part of the suffix, by chunks of $\ell$
symbols. Once we determine the number $k$ of symbols that coincide with $P$
in the path from $v$ to $u$, we perform a binary search for the highest ancestor
$v'$ of $u$ where $|P_{v'}-P_v| \ge k$. If $|P_{v'}-P_v| > k$, then $P$
does not appear in $T$ (unless $P$ ends at the $k$-th character compared, in
which case the locus of $P$ is $v'$). Otherwise, we continue the search from 
$v'$. Each binary search step requires $O(b)$ time to determine $|P_x|$
\cite{Sadakane07}, so it takes time $O(b \log d)$. Since we switch to another 
heavy path (now the one starting at $v'$) $O(\log d)$ times, the total time is 
$O(b \log^2 d) = O(\log n (\log\log n)^2)$ instead of $O(bd) =O(\log^2 n)$.

To store the leaf $u$ corresponding to each light node $v$, we record the
difference between the preorder numbers of $u$ and $v$, which requires 
$O(\log d)$ bits. The node $u$ is easily found in constant time from this
information \cite{Sadakane07}.
We have the problem, however, that we spend $O(\log d)=O(\log\log n)$ bits 
per light node, which adds up to $O(n\log\log n)$ bits. To reduce this to
$O(n)$, we choose a second sampling step $e = O(\log\log n)$, and do not store
this information on nodes with less than $e$ leaves, which are called 
light-light. Those light nodes with $e$ leaves or more are called
light-heavy, and those with at least two light-heavy children are called 
light-special. There are $O(n/e)$ light-special nodes. We store heavy 
path information only for light-special nodes or for light-heavy nodes that
are children of heavy nodes; both are $O(n/e)$ in total. A light-heavy node
$v$ that is not light-special has at most one light-heavy child $u$, and the 
heavy path that passes through $v$ must continue towards $u$. Therefore, if 
it turns out that the search must continue from $v$ after the binary search on 
the heavy path, then the search must continue towards the light-light children
of $v$, therefore no heavy-path information is needed at node $v$.

Once we reach the first light-light node $v$, we proceed as we did for 
Theorem~\ref{thm:second} on light nodes, in total time
$O(eb) = O(\log n \log\log n)$. We need, however, the interval
$\oB[l_\ov,r_\ov]$ before we can start the search from $v$. The interval
$B[l_v,r_v]$ is indeed known by counting leaves in $\cT$, and we can compute
$l_\ov = \overline{\SA}\,^{-1}[n-1-\SA[l_v]]$ and
$r_\ov = \overline{\SA}\,^{-1}[n-1-\SA[r_v]]$, in time $O(b)$.

\begin{theorem}
  \label{thm:third}
On a RAM machine of $w=\Omega(\log n)$ bits, we can construct an index for a
text $T$ of length $n$ over an alphabet of size $\sigma=O(n/\log n)$ in $O(n)$
deterministic time using $O(n\log\sigma)$ bits of working space. This index
occupies $3nH_k(T)+ o(n\log\sigma)+O(n)$ bits of space for any
$k=o(\log_\sigma n)$. The occurrences of a pattern string $P$ can be counted in
$O(|P|/\log_\sigma n+\log n(\log\log n)^2)$ time, and then each such occurrence
can be located in $O(\log n)$ time. An arbitrary substring $S$ of $T$  can be
extracted in time $O(|S|/\log_\sigma n)$.
\end{theorem}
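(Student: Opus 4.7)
The plan is to build on Theorem~\ref{thm:second} by adding two ingredients: the BWT $B$ of $T$ together with its compressed suffix tree $\cT$ \cite{Sadakane07}, contributing an additional $nH_k(T)+o(n\log\sigma)+O(n)$ bits, which brings the total to the claimed $3nH_k(T)+o(n\log\sigma)+O(n)$. The speed improvement comes from replacing the per-edge $O(b)$-cost descent through light subtrees, which cost $O(bd)=O(\log^2 n)$ in Theorem~\ref{thm:second}, by a heavy-path binary search that costs only $O(b\log^2 d)=O(\log n(\log\log n)^2)$.

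The search proceeds in three phases. The heavy phase is exactly as in Theorem~\ref{thm:second}: descend through the sampled dictionaries by jumps of $\ell=\log_\sigma n$ symbols at a time. When a dictionary lookup fails at some light node $v$, we enter a light-heavy phase: retrieve the leaf $u$ at the bottom of the heavy path starting at $v$, obtain a text position of $u$ via $SAM_b$, and compare the corresponding suffix of $T$ with $P$ in constant-time chunks of $\ell$ symbols to determine the number $k$ of matched characters. A binary search for the highest ancestor $v'$ of $u$ with $|P_{v'}|\le|P_v|+k$ uses $O(\log d)$ $O(b)$-time string-depth queries on $\cT$ \cite{Sadakane07}, costing $O(b\log d)$ per path switch; since heavy-path decomposition forces only $O(\log d)$ switches before a light-light node is reached, the phase costs $O(b\log^2 d)$.

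The delicate accounting is that naively storing a leaf identifier at every light node would cost $\Theta(n\log\log n)$ bits, exceeding $O(n)$. I would fix this by introducing a second threshold $e=\Theta(\log\log n)$ and classifying light nodes as light-light (fewer than $e$ leaves), light-heavy (at least $e$), or light-special (at least two light-heavy children). The preorder-difference encoding of heavy-path leaves, taking $O(\log d)$ bits each, is then stored only at the $O(n/e)$ light-special nodes and at the $O(n/e)$ light-heavy children of heavy nodes, for $O(n)$ bits total. The correctness argument for omitting data at non-special light-heavy nodes is that such a node has a unique light-heavy child, so its heavy path is forced; any genuine divergence of the search from this path must descend into a light-light child, at which point we translate the known $B$-interval into an $\oB$-interval in $O(b)$ time via $\SA$ and $\overline{\SA}^{-1}$ and revert to the per-edge strategy of Theorem~\ref{thm:second} on a subtree of $O(e)$ nodes, in total time $O(eb)=O(\log n\log\log n)$.

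The main obstacle I anticipate is the careful interfacing between phases: verifying that the three-way node classification leaves no case uncovered, and that whenever control passes from tree navigation in $\cT$ to backward search in $\oB$, the correct interval is available in time. Construction itself is routine: $B$, $\cT$, and their topologies are built in $O(n)$ deterministic time and $O(n\log\sigma)$ bits \cite{MNN17}; the node classification is one post-order traversal counting leaves, and the preorder differences that encode heavy-path bottoms are recovered from the tree topology in constant time per stored node.
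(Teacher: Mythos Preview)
Your proposal is correct and follows essentially the same approach as the paper: you add $B$ and the compressed suffix tree $\cT$ for the extra $nH_k(T)+o(n\log\sigma)+O(n)$ bits, replace the light-subtree descent by heavy-path jumps plus $O(b\log d)$-time binary searches on string depth for a total of $O(b\log^2 d)$, and handle the $\Theta(n\log\log n)$-bit overflow exactly as the paper does via the second threshold $e=\Theta(\log\log n)$ with the light-light/light-heavy/light-special classification, storing heavy-path leaf offsets only at light-special nodes and light-heavy children of heavy nodes. The transition from $\cT$ to $\oB$ via $\SA$ and $\overline{\SA}^{-1}$ in $O(b)$ time, and the final $O(eb)$ light-light phase, also match the paper's argument.
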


\section{Conclusions}

We have shown how to build, in $O(n)$ deterministic time and
using $O(n\log\sigma)$ bits of working space, a compressed self-index for
a text $T$ of length $n$ over an alphabet of size $\sigma$ that
searches for patterns $P$ in time $O(|P|+\log\log_w \sigma)$, on a $w$-bit
word RAM machine.
This improves upon previous compressed self-indexes requiring
$O(|P|\log\log\sigma)$ \cite{BCGNNalgor13} or 
$O(|P|(1+\log_w \sigma))$ \cite{BelazzouguiN15} time, 
on previous uncompressed indexes requiring
$O(|P|+\log\log\sigma)$ time \cite{FG13} (but that supports dynamism),
and on previous compressed self-indexes requiring $O(|P|(1+\log\log_w\sigma))$
time and randomized construction (which we now showed how to build in linear
deterministic time) \cite{BelazzouguiN15}. The only indexes offering better
search time require randomized construction 
\cite{BelazzouguiN14,GrossiV05,NN17} or $\Theta(n\log n)$ bits of space
\cite{NN17,BGS17}.

As an application, we showed that using $O(n\log\sigma)$ bits of space, we can
build in $O(n)$ deterministic time an index that searches in time
$O(|P|/\log_\sigma n+\log n(\log\log n)^2)$.

It is not clear if $O(|P|)$ time, or even $O(|P|/\log_\sigma n)$, query time
can be achieved with a linear deterministic construction time, even if we
allow $O(n\log n)$ bits of space for the index (this was recently approached,
but some additive polylog factors remain \cite{BGS17}). This is the most 
interesting open problem for future research.

\bibliography{long}

\begin{thebibliography}{10}

\bibitem{BCGNNalgor13}
J.~Barbay, F.~Claude, T.~Gagie, G.~Navarro, and Y.~Nekrich.
\newblock Efficient fully-compressed sequence representations.
\newblock {\em Algorithmica}, 69(1):232--268, 2014.

\bibitem{BBPV10}
D.~Belazzougui, P.~Boldi, R.~Pagh, and S.~Vigna.
\newblock Fast prefix search in little space, with applications.
\newblock In {\em Proc. 18th Annual European Symposium on Algorithms (ESA)},
  LNCS 6346, pages 427--438, 2010.

\bibitem{BelazzouguiCKM13}
D.~Belazzougui, F.~Cunial, J.~K{\"{a}}rkk{\"{a}}inen, and V.~M{\"{a}}kinen.
\newblock Versatile succinct representations of the bidirectional
  {B}urrows-{W}heeler transform.
\newblock In {\em Proc. 21st Annual European Symposium on Algorithms ({ESA})},
  pages 133--144, 2013.

\bibitem{BelazzouguiCKM16}
D.~Belazzougui, F.~Cunial, J.~K{\"{a}}rkk{\"{a}}inen, and V.~M{\"{a}}kinen.
\newblock Linear-time string indexing and analysis in small space.
\newblock {\em CoRR}, abs/1609.06378, 2016.

\bibitem{BelazzouguiN14}
D.~Belazzougui and G.~Navarro.
\newblock Alphabet-independent compressed text indexing.
\newblock {\em {ACM} Transactions on Algorithms}, 10(4):article 23, 2014.

\bibitem{BelazzouguiN15}
D.~Belazzougui and G.~Navarro.
\newblock Optimal lower and upper bounds for representing sequences.
\newblock {\em ACM Transactions on Algorithms}, 11(4):article 31, 2015.

\bibitem{BGS17}
P.~Bille, I.~L. G{\o}rtz, and F.~R. Skjoldjensen.
\newblock Deterministic indexing for packed strings.
\newblock In {\em Proc. 28th Annual Symposium on Combinatorial Pattern Matching
  (CPM)}, LIPIcs 78, page article 6, 2017.

\bibitem{BW94}
M.~Burrows and D.~Wheeler.
\newblock A block sorting lossless data compression algorithm.
\newblock Technical Report 124, Digital Equipment Corporation, 1994.

\bibitem{Cla96}
D.~R. Clark.
\newblock {\em Compact {PAT} Trees}.
\newblock PhD thesis, University of Waterloo, Canada, 1996.

\bibitem{CKL15}
R.~Cole, T.~Kopelowitz, and M.~Lewenstein.
\newblock Suffix trays and suffix trists: Structures for faster text indexing.
\newblock {\em Algorithmica}, 72(2):450--466, 2015.

\bibitem{Farach97}
M.~Farach.
\newblock Optimal suffix tree construction with large alphabets.
\newblock In {\em Proc. 38th Annual Symposium on Foundations of Computer
  Science ({FOCS})}, pages 137--143, 1997.

\bibitem{FerraginaM05}
P.~Ferragina and G.~Manzini.
\newblock Indexing compressed text.
\newblock {\em Journal of the {ACM}}, 52(4):552--581, 2005.

\bibitem{FMMN07}
P.~Ferragina, G.~Manzini, V.~M{\"a}kinen, and G.~Navarro.
\newblock Compressed representations of sequences and full-text indexes.
\newblock {\em ACM Transactions on Algorithms}, 3(2):article 20, 2007.

\bibitem{FV07}
P.~Ferragina and R.~Venturini.
\newblock A simple storage scheme for strings achieving entropy bounds.
\newblock {\em Theoretical Computer Science}, 371(1):115--121, 2007.

\bibitem{FG13}
J.~Fischer and P.~Gawrychowski.
\newblock Alphabet-dependent string searching with wexponential search trees.
\newblock In {\em Proc. 26th Annual Symposium on Combinatorial Pattern Matching
  (CPM)}, LNCS 9133, pages 160--171, 2015.

\bibitem{Gag06}
T.~Gagie.
\newblock Large alphabets and incompressibility.
\newblock {\em Information Processing Letters}, 99(6):246--251, 2006.

\bibitem{GolynskiMR06}
A.~Golynski, J.~I. Munro, and S.~S. Rao.
\newblock Rank/select operations on large alphabets: a tool for text indexing.
\newblock In {\em Proc.\ 17th Annual {ACM-SIAM} Symposium on Discrete
  Algorithms, ({SODA})}, pages 368--373, 2006.

\bibitem{GrossiORR09}
R.~Grossi, A.~Orlandi, R.~Raman, and S.~S. Rao.
\newblock More haste, less waste: Lowering the redundancy in fully indexable
  dictionaries.
\newblock In {\em Proc. 26th International Symposium on Theoretical Aspects of
  Computer Science (STACS)}, pages 517--528, 2009.

\bibitem{GrossiV05}
R.~Grossi and J.~S. Vitter.
\newblock Compressed suffix arrays and suffix trees with applications to text
  indexing and string matching.
\newblock {\em SIAM Journal on Computing}, 35(2):378--407, 2005.

\bibitem{HagerupMP01}
T.~Hagerup, P.~Bro Miltersen, and R.~Pagh.
\newblock Deterministic dictionaries.
\newblock {\em Journal of Algorithms}, 41(1):69 -- 85, 2001.

\bibitem{KarkkainenSB06}
J.~K{\"{a}}rkk{\"{a}}inen, P.~Sanders, and S.~Burkhardt.
\newblock Linear work suffix array construction.
\newblock {\em Journal of the {ACM}}, 53(6):918--936, 2006.

\bibitem{KSPP05}
D.~K. Kim, J.~S. Sim, H.~Park, and K.~Park.
\newblock Constructing suffix arrays in linear time.
\newblock {\em Journal of Discrete Algorithms}, 3(2-4):126--142, 2005.

\bibitem{KA05}
P.~Ko and S.~Aluru.
\newblock Space efficient linear time construction of suffix arrays.
\newblock {\em Journal of Discrete Algorithms}, 3(2-4):143--156, 2005.

\bibitem{MM93}
U.~Manber and G.~Myers.
\newblock Suffix arrays: a new method for on-line string searches.
\newblock {\em SIAM Journal on Computing}, 22(5):935--948, 1993.

\bibitem{Manzini01}
G.~Manzini.
\newblock An analysis of the {B}urrows-{W}heeler transform.
\newblock {\em Journal of the ACM}, 48(3):407--430, 2001.

\bibitem{McCreight76}
E.~M. McCreight.
\newblock A space-economical suffix tree construction algorithm.
\newblock {\em Journal of the ACM}, 23(2):262--272, 1976.

\bibitem{Mun96}
J.~I. Munro.
\newblock Tables.
\newblock In {\em Proc. 16th Conference on Foundations of Software Technology
  and Theoretical Computer Science (FSTTCS)}, LNCS 1180, pages 37--42, 1996.

\bibitem{MNN17}
J.~I. Munro, G.~Navarro, and Y.~Nekrich.
\newblock Space-efficient construction of compressed indexes in deterministic
  linear time.
\newblock In {\em Proc. 28th Annual ACM-SIAM Symposium on Discrete Algorithms
  (SODA)}, pages 408--424, 2017.

\bibitem{MRRR12}
J.~I. Munro, R.~Raman, V.~Raman, and S.~S. Rao.
\newblock Succinct representations of permutations and functions.
\newblock {\em Theoretical Computer Science}, 438:74--88, 2012.

\bibitem{NM06}
G.~Navarro and V.~M{\"a}kinen.
\newblock Compressed full-text indexes.
\newblock {\em ACM Computing Surveys}, 39(1):article 2, 2007.

\bibitem{NN17}
G.~Navarro and Y.~Nekrich.
\newblock Time-optimal top-$k$ document retrieval.
\newblock {\em SIAM Journal on Computing}, 46(1):89--113, 2017.

\bibitem{NS14}
G.~Navarro and K.~Sadakane.
\newblock Fully-functional static and dynamic succinct trees.
\newblock {\em ACM Transactions on Algorithms}, 10(3):article 16, 2014.

\bibitem{Sadakane07}
K.~Sadakane.
\newblock Compressed suffix trees with full functionality.
\newblock {\em Theory of Computing Systems}, 41(4):589--607, 2007.

\bibitem{Ukkonen95}
E.~Ukkonen.
\newblock On-line construction of suffix trees.
\newblock {\em Algorithmica}, 14(3):249--260, 1995.

\bibitem{Weiner73}
P.~Weiner.
\newblock Linear pattern matching algorithms.
\newblock In {\em Proc. 14th Annual Symposium on Switching and Automata Theory
  (FOCS)}, pages 1--11, 1973.

\end{thebibliography}

\end{document}